\theoremstyle{definition}
\newtheorem{example}{Example}
\renewcommand{\bar}{\overline}
\theoremstyle{theorem}
\newtheorem{problem}{Problem}
\newtheorem{proposition}{Proposition}
\newtheorem{theorem}{Theorem}
\newtheorem{definition}{Definition}
\theoremstyle{definition}
\theoremstyle{remark}
\newtheorem{remark}{Remark}
\newcommand{\Oc}{\mathcal{O}}
\newcommand{\Rc}{\mathcal{R}}
\newcommand{\Nc}{\mathcal{N}}
\newcommand{\Vc}{\mathcal{V}}
\newcommand{\As}{\mathscr{A}}
\newcommand{\Rb}{\mathbb{R}}
\newcommand{\Rbar}{\bar{R}}
\newcommand{\one}{\mathbb{1}}
\newcommand{\supp}{{\rm supp}}
\newcommand{\alg}{{\rm alg}}
\DeclareMathOperator{\rank}{rank}
\definecolor{cobalt}{rgb}{0.02, 0.29, 0.65}
\definecolor{cobalt}{rgb}{0.0, 0.28, 0.67}
\begin{document}

\title{Robust, positive and exact model reduction\\ via monotone matrices}

\author{Marco Cortese, Tommaso Grigoletto, Francesco Ticozzi, Augusto Ferrante \thanks{The authors are with the Department of Information Engineering, University of Padova, Via Gradenigo 6, 35131 Padova, Italy. Emails: \href{mailto:marco.cortese.2@phd.unipd.it}{\texttt{marco.cortese.2@phd.unipd.it}},
\href{mailto:tommaso.grigoletto@unipd.it}{\texttt{tommaso.grigoletto@unipd.it}},
\href{mailto:ticozzi@dei.unipd.it}{\texttt{ticozzi@dei.unipd.it}}, \href{mailto:augusto@dei.unipd.it}{\texttt{augusto@dei.unipd.it}}.\\ Acknowledgments: The work of F. T. was supported by European Union through NextGenerationEU, within the National Center for HPC, Big Data and Quantum Computing under Projects CN00000013, CN 1, and Spoke 10.}}

\date{\today}

\maketitle

\begin{abstract}
This work focuses on the problem of exact model reduction of positive linear systems, by leveraging minimal realization theory. While determining the existence of a positive reachable realization remains in general an open problem, we are able to fully characterize the cases in which the new model is obtained with non-negative reduction matrices, and hence
positivity of the reduced model is robust with respect to small perturbations of the original system.
The characterization is obtained by specializing monotone matrix theory to positive matrices. In addition, we provide a systematic method to construct positive reductions also when minimal ones are not available, by exploiting algebraic techniques. 
\end{abstract}

\begin{IEEEkeywords}
Model/Controller reduction; Positive systems
\end{IEEEkeywords}

\section{Introduction}
Positive dynamical systems are ubiquitous in the modelling of natural and artificial phenomena, ranging from biological systems to transport problems \cite{farina2000positive,rabiner1989}. Despite the substantial amount of research that has been devoted to their analysis and control, a few key open problems remain, even in the linear case \cite{benvenuti2004tutorial,benvenuti2022minimal, ohta}.
For example, it is not known when a positive system admits a minimal realization that is also positive, with the notable exception of {\em symmetric systems} in the single-input single-output (SISO) case, where the system and its dual are the same (see \cite{Grussler}, Theorem 4). In this work, we focus on the existence and the construction of exact reduced-order {\em positive}  
realizations for a given positive linear systems in state-space representation, neither necessarily symmetric nor SISO. 
 Here, we propose and solve a related sub-problem, relevant to model reduction for positive systems:  we seek to find, when possible, an {\em exact} minimal positive reachable realization that can be obtained in a robust way - namely, by restricting the original dynamics to the reachable subspace using positive maps. 

This extra requirement ensures that the positivity of the reduced-order system is maintained if the original system is affected by sufficiently small perturbations that maintain its positivity, yielding better numerical properties. This scenario is relevant, for example, in networked positive systems associated to a graph topology that structurally prevents controllability \cite{bullo}. A reduction based on a nominal model with positive maps ensures that positivity is maintained for all other models in the same class.
In general, it is a sufficient but not necessary condition for the existence of a positive reachable realization. 

Similar problems have been studied in the literature, where the reduction is obtained by leveraging the network's structure \cite{cheng2021model} or by approximating the system's behaviour \cite{amjad_generalized_2020, kotsalis_balanced_2008, deng_optimal_2011, li2011positivity, virnik, Grussler, Rantzer, virnik2, Feng}. However, our setting is distinctively different: we focus on {\em exact} positive model reduction. Our analysis relies on realization theory and the specialization of the theory of {monotone matrices} to non-negative matrices, which we are able to characterize. The main result then exploits the structure of {\em monotone non-negative matrices} to provide a (computable) test on whether a given projection admits a non-negative full-rank factorization. When one considers the projection onto the reachable subspace, this is shown to be equivalent to the existence of a robust positive reachable realization, that is, a reduced model able to {\em exactly} reproduce the dynamics of the original system. 

When a minimal positive reachable reduction does not exist, we propose a way to relax the minimality requirement and still obtain a reduced positive model, extending an algebraic approach recently developed for hidden-Markov models \cite{tac2023}.
We use the monotonicity-based results to show that, albeit effective, the algebraic extension might in general be nonoptimal, addressing a previously open question. 
Section \ref{sec:monotone} is devoted to developing the necessary results for non-negative monotone matrices, while the following ones define the robust positive reduction problem, characterize its minimal solutions, and propose an algebraic approach to find non-minimal ones.

\subsection{Notation}
Given a vector $x\in\Rb^n$ we denote by $[x]_i$ its $i$-th entry.
We denote by $\one_n$ the column vector with $n$ elements all of which are equal to $1$; If there is no risk of confusion, we drop the subscript and use the simple notation $\one$.  The identity matrix of dimension $n$ is denoted by $I_n$. Given $V \in \Rb^{n \times m}$, we denote by $V^T$ the transpose of $V$. For a square matrix $V \in \Rb^{n \times n}$, its inverse is $V^{-1}$. For a rectangular matrix $V \in \Rb^{n \times m}$ with $n \geq m$ we denote by  $V^\dag$ any matrix that belongs in the set  $V^{-l} := \{ V^\dag | V^\dag V = I_m \}$, defined as the set of the left-inverses of $V$. We denote by the term {\em row sub-matrix of $V$} a matrix whose rows are a subset of the rows of $V$, regardless of the order. Given a vector $x \in \Rb^n$ and the standard basis $\{e_i\}$ for $\Rb^n$, we define the support of $x$ to be the vector space $\mathrm{supp}(x) = \mathrm{span} \{e_i | e_i^T x \neq 0\}$. We denote the image of $V$ by $\mathrm{Im}(V)$. A matrix $V$ (or a vector $v$) is said to be non-negative if all of its entries are greater than or equal to zero.  In this case we write $V \geq 0$ (or $v \geq 0$). To denote a symmetric positive definite (semidefinite) matrix we use the symbol $\succ$ ($\succeq$). We say that a subspace $\Vc \subseteq \Rb^n$ is {\em positively generated} if it admits a basis of non-negative vectors. We denote by $\langle v, w \rangle = v^T w$ the standard inner product, and $\langle v, w \rangle_Q = v^T Q w$ the inner product associated with a square positive definite matrix $Q$.  We denote by $\Rb_+$ the set of all scalars $\lambda \in \Rb$ such that $\lambda \geq 0$, while by $\Rb^n_+$ we denote the real non-negative orthant, i.e. $\Rb^n_+ = \{x | x \in \Rb^n \;, x\geq 0\}$. With $\mathrm{cone}(V)$ we denote the conical hull of the rows of the matrix $V \in \Rb^{n \times m}$, that is, $\mathrm{cone}(V) = \{x  | x = V^T c, \; c \geq 0\}$.

\section{Monotone matrices}\label{sec:monotone}

\subsection{Preliminaries}

Monotone matrix theory and its applications have been extensively studied, see e.g. \cite{monotone} and \cite{collatz}. In the following, after reviewing their definition and a  key existing result, we shall focus on studying and characterizing {\em non-negative} monotone matrices. 

\begin{definition}[Monotone matrices]
    A real matrix $X \in \Rb^{n \times m}$ is said to be monotone if $Xx \geq 0$ implies  $ x \geq 0$, where $x \in \Rb^m$. 
\end{definition}

It has been shown that these type of matrices admit a non-negative inverse. The following characterization of monotone matrices is provided in \cite{monotone}.
\begin{theorem}
\label{thm: monotone}
    Let $X \in \Rb^{n \times m}$ be a real matrix. The following statements are equivalent:
    \begin{enumerate}
        \item $X$ is monotone;
        \item $X$ admits a non-negative left-inverse, i.e. there exists $X^\dag \in \Rb^{m \times n}$ such that $X^\dag X = I_m$, $X^\dag \geq 0$;
        \item $\mathrm{cone}(X) \supseteq \Rb^m_+$.
    \end{enumerate}
\end{theorem}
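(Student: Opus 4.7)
The plan is to establish the three equivalences via the cycle $(2) \Rightarrow (1) \Rightarrow (3) \Rightarrow (2)$, which is the most economical since two of the three implications are essentially immediate once the right viewpoint is fixed, and the third reduces to a standard separation argument.

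The implication $(2) \Rightarrow (1)$ is the easy one: assuming a non-negative left-inverse $X^\dag$ exists, for any $x \in \Rb^m$ with $Xx \geq 0$ one has $x = X^\dag X x = X^\dag(Xx) \geq 0$, because the product of a non-negative matrix with a non-negative vector is non-negative. The implication $(3) \Rightarrow (2)$ is essentially a rewriting. If $\mathrm{cone}(X) \supseteq \Rb^m_+$, then each standard basis vector $e_j \in \Rb^m$ admits a representation $e_j = X^T c_j$ for some $c_j \in \Rb^n_+$. Stacking these as $C = [c_1 \mid \cdots \mid c_m] \geq 0$ gives $X^T C = I_m$, so $X^\dag := C^T$ is a non-negative left-inverse of $X$.

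The nontrivial step, and the one I expect to be the main obstacle, is $(1) \Rightarrow (3)$. The plan here is to argue by contradiction and invoke the separation theorem for closed convex cones, observing first that $\mathrm{cone}(X)$ is polyhedral hence closed and convex. If $\mathrm{cone}(X)$ did not contain $\Rb^m_+$, there would exist some $e_j \in \Rb^m$ with $e_j \notin \mathrm{cone}(X)$. Separation then yields a vector $y \in \Rb^m$ such that $\langle y, v \rangle \geq 0$ for every $v \in \mathrm{cone}(X)$ while $\langle y, e_j \rangle < 0$. Specializing $v = X^T e_i$ for each $i$ gives $(Xy)_i = \langle y, X^T e_i \rangle \geq 0$, i.e.\ $Xy \geq 0$. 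Monotonicity of $X$ forces $y \geq 0$, which contradicts $[y]_j = \langle y, e_j \rangle < 0$. Hence $\mathrm{cone}(X) \supseteq \Rb^m_+$.

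The only delicate point is justifying the applicability of the separation theorem, which requires that $\mathrm{cone}(X)$ be closed; this is automatic since it is finitely generated (by the $n$ rows of $X$) and hence a polyhedral cone. Note also that $X$ being monotone implicitly requires full column rank (from $Xx = 0$ one gets both $\pm x \geq 0$, so $x = 0$), which is consistent with the existence of a left-inverse in $(2)$.
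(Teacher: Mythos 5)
The paper does not prove this theorem at all: it is imported as a known result from the cited reference on monotone matrices, so there is no in-paper argument to compare against. Your proof is correct and self-contained. The cycle $(2)\Rightarrow(1)\Rightarrow(3)\Rightarrow(2)$ is well chosen: $(2)\Rightarrow(1)$ is immediate from $x = X^\dag(Xx)$; $(3)\Rightarrow(2)$ correctly uses the paper's convention that $\mathrm{cone}(X)$ is the conical hull of the \emph{rows}, so that $e_j = X^T c_j$ with $c_j \geq 0$ stacks into $X^T C = I_m$ and hence $C^T X = I_m$ with $C^T \geq 0$; and $(1)\Rightarrow(3)$ is the genuinely nontrivial step, which you handle properly by strict separation of a point $e_j \notin \mathrm{cone}(X)$ from the finitely generated (hence polyhedral and closed) cone, normalizing the separating functional so that it is non-negative on the cone and negative at $e_j$, and then reading off $Xy \geq 0$ from the generators $X^T e_i$ to contradict monotonicity. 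Two details you flag are exactly the ones that need flagging: closedness of $\mathrm{cone}(X)$ (needed for separation, and automatic for finitely generated cones) and the implicit full-column-rank consequence of monotonicity, which reconciles $(1)$ with the existence of a left-inverse in $(2)$ and also explains the paper's remark that the case $n < m$ is vacuous. One small point worth making explicit if you write this up: the reduction from ``$\Rb^m_+ \not\subseteq \mathrm{cone}(X)$'' to ``some $e_j \notin \mathrm{cone}(X)$'' uses that $\mathrm{cone}(X)$ is closed under non-negative combinations, so containing all the $e_j$ would force it to contain all of $\Rb^m_+$; you use this implicitly and it is correct.
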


Notice that the only interesting case is when $n \geq m$. In the case $n < m$ the conditions are never satisfied and the theorem is trivially true. 

\subsection{Characterization of non-negative monotone matrices}
\label{sect: characterization of monotone matrices}
We next specialize the existing results to the case of non-negative monotone matrices. To this end, it is important to recall a property of orthogonal non-negative vectors: 
\begin{remark}
\label{rem:disjoint_support}
    Two non-negative vectors $v,w \in \Rb^n$ are orthogonal if and only if they have disjoint support, i.e. $\mathrm{supp}(v) \cap \mathrm{supp}(w) = \{0\}$. To verify this we can observe that $v$ and $w$ are orthogonal if and only if $v^Tw = \sum_{i=0}^n [v]_i [w]_i = 0$. Since $[v]_i \geq 0$, $[w]_i \geq 0$ for all $i$, then $[v]_i[w]_i\geq0$ for all $i$ and thus the sum of non-negative elements is zero if and only if they are all zero. Moreover $[v]_i[w]_i=0$ if and only if for every index $i$ either $[v]_i = 0$ or $[w]_i = 0$ or both, i.e. they have disjoint support.
\end{remark}  
This fact, together with Theorem \ref{thm: monotone}, is enough to formalize a characterization for square non-negative monotone matrices whose proof can be found in \cite{berman}. 
\begin{proposition}
\label{prop: square monotone}
    Let $X \in \Rb^{n \times n}$ be an invertible non-negative square matrix. Then it is monotone if and only if all of its columns (and rows) are mutually orthogonal.
\end{proposition}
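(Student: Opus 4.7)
The plan is to reduce the statement to the classical characterization of non-negative invertible matrices whose inverse is also non-negative (so-called monomial or generalized permutation matrices), and then translate this structural fact into the orthogonality statement by means of Remark~\ref{rem:disjoint_support}. Since $X$ is square and invertible, its unique left-inverse is $X^{-1}$, so item~2 of Theorem~\ref{thm: monotone} gives that $X$ is monotone if and only if $X^{-1} \geq 0$. The proposition therefore reduces to showing that, for $X \geq 0$ invertible, the condition $X^{-1} \geq 0$ is equivalent to $X$ being monomial, which, via Remark~\ref{rem:disjoint_support}, is precisely what mutual orthogonality of the columns (and rows) amounts to.

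For the forward direction I would expand $X X^{-1} = I$ entry-wise: for $i \neq j$, the $i$-th row of $X$ and the $j$-th column of $X^{-1}$ are non-negative vectors with vanishing inner product, hence by Remark~\ref{rem:disjoint_support} they have disjoint supports. Letting $k$ be any index in the support of the $i$-th row of $X$, this forces $(X^{-1})_{kj} = 0$ for every $j \neq i$, i.e., the $k$-th row of $X^{-1}$ is supported in $\{i\}$ alone. If the $i$-th row of $X$ had two distinct non-zero entries $k_1 \neq k_2$, the rows $k_1$ and $k_2$ of $X^{-1}$ would both be scalar multiples of $e_i^T$ and hence linearly dependent, contradicting invertibility of $X^{-1}$. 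Thus every row of $X$ has at most, and by invertibility exactly, one non-zero entry; since $X$ is invertible of size $n$, this forces exactly one non-zero entry per column as well, so $X$ is monomial. In particular distinct columns (and rows) of $X$ have disjoint supports and are orthogonal by Remark~\ref{rem:disjoint_support}.

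For the reverse direction, assume the $n$ columns of $X$ are pairwise orthogonal. By Remark~\ref{rem:disjoint_support} they have pairwise disjoint supports, and by invertibility none of them is zero. Having $n$ non-empty pairwise disjoint subsets of $\{1,\ldots,n\}$ forces each to be a singleton, so $X = DP$ for some positive diagonal $D$ and permutation $P$; then $X^{-1} = P^T D^{-1} \geq 0$, and Theorem~\ref{thm: monotone} delivers monotonicity. The only delicate step is the forward one, where the rigid monomial structure must be extracted from a single algebraic identity together with non-negativity; the whole leverage is provided by Remark~\ref{rem:disjoint_support}, which converts zero inner products of non-negative vectors into disjoint-support statements, after which a simple counting argument on supports closes the loop.
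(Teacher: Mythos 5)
Your proof is correct, but it takes a genuinely different route from the paper's. You work through item~2 of Theorem~\ref{thm: monotone}: for a square invertible $X$ the only left-inverse is $X^{-1}$, so monotonicity becomes $X^{-1}\geq 0$, and you then extract the monomial (generalized permutation) structure directly from the identity $XX^{-1}=I$ --- using Remark~\ref{rem:disjoint_support} to turn the vanishing off-diagonal inner products into disjoint-support statements, and a linear-independence count on the rows of $X^{-1}$ to rule out a row of $X$ having two non-zero entries. The paper instead argues through item~3: since $X\geq 0$ forces $\mathrm{cone}(X)\subseteq\Rb^n_+$, monotonicity gives $\mathrm{cone}(X)=\Rb^n_+$, and the observation that conic combinations of non-negative vectors can only enlarge support forces the $n$ generators to be proportional to the canonical basis. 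Your version is somewhat longer but makes the ``monotone $\Rightarrow$ orthogonal'' direction fully explicit where the paper's support-growth argument is stated more informally; the paper's version is shorter and stays closer to the cone-theoretic picture that is reused in Proposition~\ref{prop: rect monotone}. Both directions of your argument are sound, including the counting step in the reverse direction (the $n$ pairwise disjoint non-empty supports must be singletons), so there is nothing to fix.
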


The following result extends Proposition
\ref{prop: square monotone} to rectangular matrices: { while such extension} is not new (see again \cite[Chapter 3]{farina2000positive}), we prove it here in a way that serves as an introduction to the proofs of the results of the next sections.
\begin{proposition}
\label{prop: rect monotone}
    Let $X \in \Rb^{n \times m}$ with $n \geq m$ be a full-rank non-negative matrix. Then $X$ is monotone if and only if it contains a set of $m$ distinct orthogonal rows. 
\end{proposition}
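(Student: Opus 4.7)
The plan is to prove both directions via Theorem \ref{thm: monotone}, which equates monotonicity with the existence of a non-negative left inverse.

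For sufficiency, I would start from $m$ pairwise orthogonal rows of $X$ with indices $i_1,\dots,i_m$. Remark \ref{rem:disjoint_support} forces their supports (as vectors in $\Rb^m$) to be pairwise disjoint, and since $X$ has full rank none of these rows can be zero. Having $m$ nonzero vectors in $\Rb^m$ with pairwise disjoint supports forces each to be supported on exactly one coordinate, so the $m\times m$ submatrix they form is a non-negative generalized permutation matrix, whose inverse is itself non-negative. Padding this inverse by zero columns on the remaining row positions yields an explicit non-negative left inverse of $X$, and Theorem \ref{thm: monotone} closes the argument.

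For necessity, I would take a non-negative left inverse $X^\dag$ of $X$, provided by Theorem \ref{thm: monotone}, and expand the identity $X^\dag X = I_m$ as the sum
\[
I_m = \sum_{i=1}^n y_i x_i^T,
\]
where $x_i^T\in\Rb^m$ is the $i$-th row of $X$ and $y_i\in\Rb^m$ is the $i$-th column of $X^\dag$. All terms are entrywise non-negative, so the off-diagonal zeros force $[y_i]_j[x_i]_k=0$ for all $i$ and all $j\neq k$; whenever both $y_i$ and $x_i$ are nonzero they must therefore be supported on the same single coordinate. The diagonal equation $\sum_i [y_i]_j [x_i]_j = 1$ then supplies, for each $j\in\{1,\dots,m\}$, an index $i_j$ with $[y_{i_j}]_j>0$ and $[x_{i_j}]_j>0$, which in view of the previous constraint means $x_{i_j}$ is a positive multiple of the standard basis vector $e_j$. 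The indices $i_1,\dots,i_m$ are automatically distinct because the corresponding rows of $X$ are supported on distinct coordinates, and the rows $x_{i_1}^T,\dots,x_{i_m}^T$ are pairwise orthogonal as required.

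The delicate step is the necessity direction, specifically the passage from the algebraic identity $X^\dag X = I_m$ to combinatorial information about supports of individual rows. Everything hinges on the fact that a sum of non-negative scalars can vanish only if every summand vanishes; this promotes the off-diagonal zeros into rigid support constraints and ultimately isolates, for each direction $e_j$, a row of $X$ lying on the ray it spans.
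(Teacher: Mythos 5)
Your proof is correct, and the necessity direction takes a genuinely different route from the paper's. For sufficiency both arguments are essentially the same: you identify the $m\times m$ generalized-permutation submatrix formed by the orthogonal rows and then invoke Theorem \ref{thm: monotone}, the only cosmetic difference being that you exhibit the non-negative left inverse (condition 2) where the paper verifies $\mathrm{cone}(X)\supseteq\Rb^m_+$ (condition 3). For necessity, however, the paper argues through the cone: from $\mathrm{cone}(X)\supseteq\Rb^m_+$ it asserts that some $m$ rows already generate $\Rb^m_+$ and then reduces to the square case of Proposition \ref{prop: square monotone}. You instead expand $X^\dag X=I_m$ as $\sum_i y_i x_i^T$ with all summands entrywise non-negative, use the off-diagonal zeros to force each pair $(y_i,x_i)$ with both factors nonzero onto a common single coordinate, and use the diagonal entries to extract, for each $j$, a row proportional to $e_j$. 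This is more elementary and self-contained: it bypasses Proposition \ref{prop: square monotone} entirely and also avoids the paper's somewhat informal step that ``$m$ rows of $X$ generate the cone $\Rb^m_+$ since the other $n-m$ rows are linearly dependent,'' which as stated really needs an extreme-ray argument to be airtight. The paper's route, in exchange, reuses the square case and keeps the geometric (conical) picture in view. The distinctness of your indices $i_1,\dots,i_m$ is correctly justified by the rows being supported on distinct coordinates.

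One small remark on your sufficiency step: the claim that ``since $X$ has full rank none of these rows can be zero'' is not a valid inference, because the rank of $X$ could be carried by rows outside the chosen orthogonal set. What actually saves the argument is the standard convention (implicitly adopted by the paper as well, since its $X_0$ is taken to be invertible) that an orthogonal set consists of nonzero vectors; under the permissive reading that admits the zero vector, the statement itself would fail, e.g.\ for the full-rank matrix with rows $(1,1)$, $(0,0)$, $(1,0)$. It is worth replacing that sentence with an explicit appeal to this convention.
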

\begin{proof}
($\Rightarrow$) Suppose that $X$ is monotone: by point 3 of Theorem 1, $\mathrm{cone}(X) \supseteq \mathbb{R}^m_+$.  This is equivalent to saying that the $m$ rows of $X$ generate the cone $\mathbb{R}^m_+,$ since all the others $n-m$ rows are linearly dependent and therefore are inside the cone generated by these $m$ rows. Reordering the rows of $X$ so that these $m$ rows are at the top, we obtain a square $m \times m$ monotone matrix. According to Proposition \ref{prop: square monotone} its rows are orthogonal.\\
($\Leftarrow$) Observe that a reordering of the rows of $X$ is equivalent to left-multiply $X$ by a permutation matrix $P$. This operation does not alter the row space, namely the span of the rows of $X$ as $\mathrm{span}(X^\top) = \mathrm{span}(X^\top P^\top)$. More importantly, it implies $\mathrm{cone}(PX) = \mathrm{cone}(X)$. Suppose now that $X$ contains a set of $m$ distinct orthogonal rows. There exists a permutation matrix $P \in \Rb^{n \times n}$ that rearranges the $m$ orthogonal rows at the top such that
    \[
    PX = 
    \left[
    \begin{array}{c}
      X_0 \\ \hline
      X_1
    \end{array}
    \right],
    \]
    where $X_0 \in \Rb^{m \times m}$ is a diagonal non-negative matrix. It follows that $\mathrm{cone}(PX) = \mathrm{cone}(X_0) = \Rb^m_+$, proving that $PX$, and thus $X$, is monotone. 
\end{proof}
This Proposition allows us to derive some useful results regarding non-negative projectors.

\subsection{Non-negative full-rank factorization of projections}
Let us start this subsection by observing that, given a subspace $\Vc \subset \Rb^n$ of dimension $m \leq n$, any projector $\Pi_\Vc$ onto $\Vc$  can be written as $\Pi_\Vc = J J^\dag$, for some matrix $J \in \Rb^{n \times m}$ such that $\mathrm{Im}(J) = \Vc$ and $J^\dag J = I_m$. Furthermore, notice that any other projector $\widetilde{\Pi}_\Vc$ onto  $\Vc$ can be written as $\widetilde{\Pi}_\Vc =J \widetilde{J}^{\dag}$ for some other left inverse $\widetilde{J}^{\dag}\neq J^\dag$.

A nontrivial question then arises: given a vector space $\Vc$ is there a projector $\Pi_\Vc$ that can be factorized into $\Pi_\Vc=J J^\dag$, such that $J,J^\dag \geq 0$? The following result characterizes such subspaces and provides a valid factorization.
\begin{theorem}
\label{thm: pos_dec}
    Let $\Vc \subseteq \Rb^n$ be a subspace of dimension $m$. Then the following two conditions are equivalent: \begin{itemize}
        \item There exists a projector $\Pi_\Vc$ with non-negative factors, i.e. $\Pi_\Vc = J J^\dag$ such that $J,J^\dag\geq0$ and $J^\dag J = I_m$;
        \item For any matrix $V \in \Rb^{n \times m}$ with $\mathrm{Im}(V)=\Vc$, there exists a permutation matrix $P$ such that \[PV = \left[\begin{array}{c}V_0 \\ \hline V_1 \end{array}\right]\] 
    where the sub-matrices $V_0\in{\mathbb R}^{m\times m}$ and $V_1\in{\mathbb R}^{(n-m)\times m}$
are such that $\rank(V_0)=m$ and $V_1V_0^{-1}\geq 0$.
    \end{itemize}
\end{theorem}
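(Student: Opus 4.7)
The plan is to deduce both directions directly from the characterization of non-negative monotone matrices in Proposition \ref{prop: rect monotone}, using that any projector onto $\Vc$ factors as $JJ^\dag$ where $\mathrm{Im}(J)=\Vc$ and $J^\dag J=I_m$, and that any two such generators of $\Vc$ differ by an invertible right factor.

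For the forward direction, I would start from a factorization $\Pi_\Vc = JJ^\dag$ with $J,J^\dag\geq 0$. Since $J^\dag J=I_m$, the matrix $J$ is non-negative and admits a non-negative left inverse, so by Theorem \ref{thm: monotone} it is monotone; by Proposition \ref{prop: rect monotone} it contains $m$ mutually orthogonal (non-negative) rows. Let $P$ be any permutation matrix that moves those rows to the top, so that
\[
PJ=\left[\begin{array}{c} J_0 \\ \hline J_1 \end{array}\right],
\]
with $J_0\in\Rb^{m\times m}$ non-negative and (since $J$ is full rank) invertible diagonal, hence $J_0^{-1}\geq 0$ and $J_1J_0^{-1}\geq 0$. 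To pass from $J$ to an arbitrary $V\in\Rb^{n\times m}$ with $\mathrm{Im}(V)=\Vc$, note that $V=JT$ for some invertible $T\in\Rb^{m\times m}$. Then $PV=\bigl[\begin{smallmatrix}J_0T\\ J_1T\end{smallmatrix}\bigr]$, so choosing $V_0=J_0T$ and $V_1=J_1T$ gives $\rank(V_0)=m$ and $V_1V_0^{-1}=J_1J_0^{-1}\geq 0$, which is exactly the claim.

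For the reverse direction, pick any $V$ with $\mathrm{Im}(V)=\Vc$ (for instance, any basis of $\Vc$ arranged as columns) and, by hypothesis, obtain a permutation $P$ such that $PV=\bigl[\begin{smallmatrix}V_0 \\ V_1\end{smallmatrix}\bigr]$ with $V_0$ invertible and $V_1V_0^{-1}\geq 0$. The natural candidate is
\[
J := VV_0^{-1}, \qquad J^\dag := \bigl[\, I_m \;\; 0_{m\times(n-m)}\bigr]\, P.
\]
Indeed $PJ=\bigl[\begin{smallmatrix}I_m \\ V_1V_0^{-1}\end{smallmatrix}\bigr]\geq 0$, and since $P^{-1}$ is itself a permutation matrix (hence non-negative), $J\geq 0$ follows. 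Similarly $J^\dag$ is a product of non-negative matrices, so $J^\dag\geq 0$. A direct computation gives $J^\dag J=\bigl[I_m \;\; 0\bigr]PJ=I_m$, and $\mathrm{Im}(J)=\mathrm{Im}(V)=\Vc$ because $V_0^{-1}$ is invertible. Therefore $\Pi_\Vc=JJ^\dag$ is a projector onto $\Vc$ with the required non-negative factorization.

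The only delicate point is the \emph{for any} quantifier in the second condition: in the forward direction, I would not be able to read off the decomposition for an arbitrary $V$ directly, and must first identify the ``good'' $J$ using monotonicity, then transport the permutation structure through the invertible change of basis $T$ relating $V$ and $J$. Once it is clear that the permutation $P$ depends only on $\Vc$ (via which rows of any spanning matrix are proportional to standard basis vectors in the monotone generator $J$), the two implications become routine linear algebra.
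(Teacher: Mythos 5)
Your proof is correct and follows essentially the same route as the paper's: both directions hinge on the monotonicity of $J$ via Theorem \ref{thm: monotone} and Proposition \ref{prop: rect monotone}, the reverse direction uses the identical construction $J=VV_0^{-1}$, $J^\dag=[\,I_m\;\;0\,]P$, and the forward direction handles the ``for any $V$'' quantifier exactly as the paper does, by writing $V=JT$ (the paper's $\hat V T$) and observing that $V_1V_0^{-1}=J_1J_0^{-1}$ is invariant under the change of basis $T$. No gaps; the parenthetical that $J_0$ is (up to reordering) a non-negative generalized permutation matrix with $J_0^{-1}\geq 0$ matches the paper's use of Proposition \ref{prop: square monotone}.
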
 
\begin{proof}
($\Leftarrow$)
Suppose that there exist $V_0$ with $\rank(V_0)=m$ and $V_1$, sub-matrices of $V$ of the dimensions given in the statement, and such that $V_1 V_0^{-1} \geq 0$.  In this case, we can right-multiply $PV$ by $V_0^{-1}$ (since $V_0$ is full-rank) to find: 
\begin{equation*}
    P V V_0^{-1} = \left[\begin{array}{c}I_m \\ \hline V_1 V_0^{-1}\end{array}\right].
\end{equation*}
Then $PVV_0^{-1}$ is non-negative, since $V_1 V_0^{-1} \geq 0$, and monotone, by Proposition \ref{prop: rect monotone} and by the fact that the first $m$ rows are mutually orthogonal. Because permutations on the left do not affect non-negativity and the monotone property, we have that $V V_0^{-1}$ is also non-negative and monotone. Then, to construct the two factors, we can take, for example, $J=VV_0^{-1}$ and $J^\dag = \left[\begin{array}{c|c}I_m &0\end{array}\right]P$ which are both non-negative and $\mathrm{Im}(J) = \mathrm{Im}(V)=\Vc$. 

($\Rightarrow$)
Assume now that there exist $\Pi_\Vc = J J^\dag$ such that $J,J^\dag\geq0$. Then, $J$ is monotone by Theorem \ref{thm: monotone}. From Proposition \ref{prop: rect monotone} we then have that, since $J$ is non-negative and monotone, there exist a permutation matrix $P$ such that \[PJ = \left[\begin{array}{c} J_0 \\ \hline J_1\end{array}\right]\] with $J_0\in\Rb^{m\times m}$ and $J_1\in\Rb^{n-m\times m}$ submatrices of $J$ such that $J_1$ is non-negative and $J_0$ is a non-negative diagonal full-rank matrix. This directly implies that, $J_1 J_0^{-1} \geq 0$ and we can thus define $\hat{V}=JJ_0^{-1}$. Any other matrix $V\in\mathbb{R}^{n\times m}$ such that ${\rm Im}(V) = \Vc$ can then be written as $V=\hat{V}T$ for some invertible matrix $T$. Then, when multiplied to the left by $P$ we have \[P V = \left[\begin{array}{c} T \\ \hline J_1J_0^{-1} T\end{array}\right]\]
where $J_1J_0^{-1}TT^{-1} = J_1 J_0^{-1} \geq0$, concluding the proof.
\end{proof}
This theorem provides us with (1) a systematic method to verify whether a given subspace $\Vc$ admits a projector $\Pi_\Vc$ with non-negative factors or not and (2) a way to compute a non-negative left inverse of the factor $J$. This allows us to implement an algorithm to perform this verification, here summarized in Algorithm \ref{algo:existence_nonneg_fact}. Notice that, any choice of the matrix $V$ is equivalent to the aim of assessing the existence of a non-negative factorization $\Pi_{\cal V}=JJ^\dag$. This must be the case, as the existence of the factorization is a property of $\Pi_\Vc$ alone. As for the practical implementation, the computational complexity depends only on the number of permutation matrices $P$ to check. More precisely, given a matrix $V \in \Rb^{n \times m}$, $\textrm{Im}(V)=\Vc$, such number is at most $\frac{n!}{(n-m)!}.$

\begin{algorithm}[ht]
    \caption{Check the existence of a non-negative factorization of a projector onto a subspace}
    \label{algo:existence_nonneg_fact}
    \SetAlgoLined
    Let $\Vc \subseteq \Rb^n$\;
    Pick $V \in \Rb^{n \times m}$ such that $\mathrm{Im}(V) = \Vc$\;
    \For{
    all $m$-permutations of $n$, $P$
    }{
    Compute $PV =  \left[\begin{array}{c}V_0 \\ \hline V_1 \end{array}\right]$, $V_0\in\Rb^{m\times m}$\;

    \If{$\rank(V_0)=m$ and $V_1 V_0^{-1} \geq 0$}{
    Compute $J = V V_0^{-1}$\;
    Compute the non-negative left-inverse $J^\dag = \left[\begin{array}{c|c}I_d &0\end{array}\right]P$\;
    \textbf{Stop:} non-negative factorization found\;
    }
    }
    \textbf{Negative output:} a non-negative factorization of a  projector onto $\Vc$ does not exist\;
\end{algorithm}

To conclude this subsection, we shall discuss positively generated vector spaces.  Notice that the existence of a non-negative factorization of a projector $\Pi_\Vc$ implies that $\Vc$ is positively generated by the columns of $\hat{V}$ as defined in the second half of the proof.

\section{Robust positive model reduction}

In the following section we show how the novel non-negative factorization of projectors can be applied to robustly reduce positive systems without losing the positivity of the system. Consider a general discrete-time system in state-space form:
\begin{equation}
    \Sigma : = \begin{cases}
    x(k+1) = A x(k) + B u(k) \\
    y(k) = C x(k)
    \end{cases}
\end{equation}
where $A \in \Rb^{n \times n}$, $B \in \Rb^{n \times m}$, $C \in \Rb^{p \times n} $ are real matrices. Following \cite[Chapter 2]{farina2000positive}, a discrete-time system is said to be (internally) positive if and only if, for any non-negative input sequence $u(k)$ and non-negative initial state $x(0)$, both the state and output sequences $x(k)$ and $y(k)$ remain non-negative.  As shown in \cite{farina2000positive}, the non-negativity constraint on the state and output sequences is equivalent to the non-negativity constraint on the entries of the system matrices $A,B,C$. From now on, with positive system we mean an internally positive system.
We denote the reachability matrix in its standard formulation as $R = [B \; AB \; \ldots \; A^{n-1} B]$ and we define the reachable space as  $\Rc := \mathrm{Im}(R)$ with $q=\dim(\Rc)$. Note that this represents the space of reachable states with inputs that are not necessarily positive: the set of states reachable with positive inputs is in general more restricted \cite{A9}. Nonetheless, for a positive system, $\Rc$ is positively generated by construction. We say that $\Rbar \in \Rb^{n \times q}$ is a truncated reachability matrix if  
$\mathrm{Im}(\bar{R}) = \Rc$. A common way to choose $\Rbar$ is by selecting $q$ linearly independent columns of $R$. 

We now present different positive reduction problems. We start by defining what we mean by equivalent systems.
\begin{definition}Two systems $\Sigma = (A,B,C),$ $\Sigma' = (A',B',C')$ are (input-output) {\em equivalent} if $CA^kB = C' (A')^k B'$ for every $k \geq 0.$
\end{definition}
\noindent Equivalence thus guarantees that the two systems are different state-space realizations of the same input-output relation, since the $CA^kB$ correspond to the coefficients of the series expansion of the system transfer function in $z^{-k}$.
\begin{problem}[Positive model reduction]
    Consider a positive system $\Sigma = (A,B,C)$ of dimension $n$. Find a positive system $\Sigma_r = (A_r, B_r, C_r)$ of dimension $d < n$ such that $\Sigma$ and $\Sigma_r$ are equivalent.
\end{problem}

A widely accepted technique to perform model reduction is to put the system in the reachable (or observable) form and consider only the reachable (or observable) subsystem. The latter procedure is equivalent to projecting the system onto the reachable space. Indeed, given a projector onto the reachable space $\Pi_\Rc = JJ^\dag$ where $\mathrm{Im}(\Pi_\Rc) = \Rc$, exploiting the fact that the reachable space is the smallest $A$-invariant subspace that contains $\mathrm{Im}(B)$, the Markov's coefficients of a general system $\Sigma=(A,B,C)$ are
\[
\begin{split}
    M_{k-1} &= C A^k B = C A^k \Pi_\Rc B = C \Pi_\Rc A^k \Pi_\Rc B \\
    &=(C J) (J^\dag A^k J) (J^\dag B) = (CJ)(J^\dag A J)^k( J^\dag B),
\end{split}
\]
where $(J^\dag A J)^k = (J^\dag A^kJ)$ since $J J^\dag = \Pi_R$. Hence, the reduced system $\Sigma_r$ with $A_r = J^\dag AJ$, $B_r = J^\dag B$, $C_r = CJ$ is equivalent to the original one. However, the positivity constraints on the reduced system  matrices $J^\dag A J, J^\dag B, CJ$ pose significant challenges, especially in presence of perturbations. In a general context, the only way to assure non-negativity of $J^\dag B$ where $B$ can vary among all non-negative matrices is to impose $J^\dag$ to be non-negative. For such  reason, we focus our attention on a novel problem.

\begin{problem}[Robust positive model reduction]
\label{prob:rpmr}
    Consider a positive system $\Sigma = (A,B,C)$ of dimension $n$. Find, if it exists, a projector $\Pi = J J^\dag$, $J \in \Rb^{n \times q}$, $J^\dag \in \Rb^{q \times n}$, $q < n$ with $J^\dag,J\geq 0$, so that $\Sigma'=(J^\dag A J,J^\dag B, CJ)$ is equivalent to $\Sigma$.
\end{problem}
The non-negativity condition on $J^\dag$ and $J$ is enough to ensure the non-negativity of the reduced system.
With robust we mean that small uncertainties on the original model (that maintain the model positive) would not compromise the positivity of the reduction. On the other hand, if $J^\dag$ or $J$ have some negative entries, it may happen that the perturbed model could lose its positivity property when reduced, whereas the unperturbed model would preserve it.  We provide a simple example for completeness.
\begin{example}
    Consider the positive system
    \[
   A =
   \begin{bmatrix}
       1 & 1+\epsilon & 0 & 0 \\
       1 & 0 & 2 & 0 \\
       0 & 0 & 1 & 2 \\
       0 & 0 & 3 & 1
   \end{bmatrix}
   \qquad
   B = \begin{bmatrix}
       1 \\ 1+\epsilon \\ 0 \\ 0
   \end{bmatrix}
    \]
    with $\epsilon\geq 0$.
    Starting with the un-perturbed model, i.e. $\epsilon = 0$, the reachability matrix and its truncated version are the following
    \[
    R =
    \begin{bmatrix}
       1 & 2 & 3 & 4 \\
       1 & 1 & 2 & 3 \\
       0 & 0 & 0 & 0 \\
       0 & 0 & 0 & 0
   \end{bmatrix},\qquad
   \Rbar = 
    \begin{bmatrix}
        1 & 2 \\ 1 & 1 \\ 0 & 0 \\ 0 & 0
    \end{bmatrix}
    \]
    which have rank equal to 2.
    A possible choice of left-inverse of $\Rbar$ is the Moore-Penrose pseudo-inverse
    \[
    \Rbar^\dag = 
    \begin{bmatrix}
        -1 & 2 & 0 & 0 \\ 1 & -1 & 0 & 0
    \end{bmatrix}.
    \]
     Notice that $\Rbar^\dag$ does not enjoy the non-negative property. However, the reduced system result to be positive:
    \[
    A_r = \Rbar^\dag A \Rbar = 
    \begin{bmatrix}
        0 & 1 \\ 1 & 1
    \end{bmatrix}
    \qquad
    B_r = \Rbar^\dag B = 
    \begin{bmatrix}
        1 \\ 0
    \end{bmatrix}.
    \]
On the other hand, when considering a small perturbation on the model, i.e. $\epsilon = 0.1$, the reduction is not assured to preserve the positivity of the system. In fact, in this case the reduced model would not result in a positive system:
\[
    \Rbar^\dag A^* \Rbar = \begin{bmatrix}
       -0.1 & 0.9 \\ 1.1 & 1.1
   \end{bmatrix}
   \qquad
   \Rbar^\dag B^* = \begin{bmatrix}
       1.2 \\ -0.1
   \end{bmatrix}.
    \]
    \qed
\end{example}
As mentioned before, we exploit the fact that projecting onto the reachable space yields an equivalent reduced system. We refer to this approach as reachable Robust Positive Model Reduction (reachable RPMR). With this aim, we firstly consider projectors onto the reachable space, and secondly we make considerations on the dual approach, namely obtaining a reduction by projecting onto the so called observable space.
Exploiting the mathematical result given in the previous section, we delineate the necessary and sufficient conditions regarding the feasibility of the reachable RPMR approach.  

\begin{theorem}
    \label{thm: rpmr}
    Consider a positive system $\Sigma=(A,B,C)$.  Let $\Rc$ be the reachable space. Then there exists a projector $\Pi_\Rc$ onto the reachable space which can be factorized as $\Pi_\Rc = JJ^\dag$ with $J,J^\dag \geq 0$ if and only for any matrix $V \in \Rb^{n \times m}$ with $\mathrm{Im}(V)=\Rc$, there exists a permutation matrix $P$ such that \[PV = \left[\begin{array}{c}V_0 \\ \hline V_1 \end{array}\right],\]   where the sub-matrices $V_0\in{\mathbb R}^{m\times m}$ and $V_1\in{\mathbb R}^{(n-m)\times m}$
are such that $\rank(V_0)=m$ and $V_1V_0^{-1}\geq 0$. Moreover, such projector solves Problem \ref{prob:rpmr}. 
\end{theorem}
\begin{proof}
    Since $\Rc$ is a positively generated space, satisfying the condition of Theorem \ref{thm: pos_dec} is equivalent to saying that there exists a non-negative full-rank factorization of a projector $\Pi_\Rc=J J^\dag$ such that $\mathrm{Im}(\Pi_\Rc) = \Rc$ and $J,J^\dag \geq 0$. 
\end{proof}
Solving Problem \ref{prob:rpmr} also implies that projecting $\Sigma$ onto the reachable space through such $\Pi_\Rc$ we obtain $\Sigma_r = (J^\dag AJ, J^\dag B, CJ)$ that is a positive reachable model that reproduces the reachable dynamics of $\Sigma$. Therefore, Theorem \ref{thm: rpmr} gives necessary and a sufficient condition to ensure a robust positive reachable reduction.

\subsection{Observable RPMR and minimal positive realization}
\label{sect: observable and minimal realiz}

Another way of performing model reduction is by leveraging the knowledge of the system's output map and thus the \textit{observable subspace}.
However, while the reachable space is uniquely defined, the observable space is only defined as {\em a} complement of the unobservable space and is therefore not uniquely determined. More precisely, given a system $\Sigma = (A,B,C)$ of dimension $n$, the observability matrix is defined as 
\[
O := \left[
    \begin{array}{c}
      C \\ \hline
      CA \\ \hline
      \vdots \\ \hline
      C A^{n-1}
    \end{array}
    \right]
\]
and the non-observable space is defined as $\Nc := \mathrm{ker}(O)$.  Any space of the form $\Oc_Q :=\{ x \; | \; \langle x,y \rangle_Q = 0 \; \forall y \in \Nc\}$ is a legitimate observable subspace as long as the matrix $Q \in \Rb^{n \times n}$ is positive definite. In other words, the observable spaces are all the spaces $\Oc_Q = \mathrm{Im}(Q O^T)$, where $Q = Q^T \succ 0$. In cases where the standard observable space $\Oc_I$ does not permit a RPMR, it is possible that a RPMR becomes attainable by considering a different observable space $\Oc_Q$ for some $Q \neq I$. This consideration allows for a characterization for the existence of a minimal ``observable'' robust positive reduction. 
\begin{proposition}
    Consider a positive system $\Sigma = (A,B,C)$. Suppose it admits a reachable RPMR by projecting onto the reachable space and call the reduced reachable system $\Sigma_R$. The system $\Sigma$ admits a minimal RPMR (which is also a minimal realization) if and only if there exists a matrix $Q \succ 0$ such that there exists a projector $\Pi_{\Oc_Q}$ on the corresponding observable subspace $\Oc_Q$ that admits a non-negative full-rank factorization.
\end{proposition}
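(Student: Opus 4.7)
The plan is to establish the observable analog of Theorem~\ref{thm: rpmr} and then combine it with the assumed reachable RPMR. First, by dualizing the computation that precedes Problem~\ref{prob:rpmr}, one shows that projecting onto any observable subspace $\Oc_Q$ via a projector $\Pi_{\Oc_Q}$ whose kernel equals $\Nc$ preserves all Markov coefficients. The dual ingredients are that $\Nc$ is $A$-invariant and $\Nc \subseteq \bigcap_{k\geq 0}\ker(CA^k)$, which give $C\Pi_{\Oc_Q}=C$ and $CA^k(I-\Pi_{\Oc_Q})=0$; a short induction then yields $CA^kB = (CK)(K^\dag AK)^k(K^\dag B)$ for any factorization $\Pi_{\Oc_Q}=KK^\dag$ with the prescribed kernel. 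This is the observable counterpart of what Theorem~\ref{thm: rpmr} provides on the reachable side.

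For the $(\Leftarrow)$ direction, we have at hand both $\Pi_\Rc=JJ^\dag$ (from the reachable RPMR hypothesis) and $\Pi_{\Oc_Q}=KK^\dag$ (by assumption), with all factors non-negative. The minimal realization has dimension $d=\dim\Rc-\dim(\Rc\cap\Nc)$, realized by projection onto $\Rc\cap\Oc_Q$. I would compose the two reductions: first reduce $\Sigma$ to $\Sigma_R$ via $\Pi_\Rc$, then further reduce $\Sigma_R$ to its observable part using non-negative factors assembled as products of $J^\dag, K$ on one side and $K^\dag, J$ on the other, with a suitable normalization selecting the right $d$-dimensional subspace. The composed reduction has dimension $d$, is reachable and observable (hence minimal), is positive thanks to the non-negativity of products of non-negative matrices, and is equivalent to $\Sigma$ since each intermediate reduction preserves Markov coefficients.

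For the $(\Rightarrow)$ direction, assume a minimal RPMR $\Pi=MM^\dag$ with $M,M^\dag\geq 0$ and $\dim\mathrm{Im}(M)=d$ exists. Preservation of the Markov parameters together with minimality forces $\mathrm{Im}(M)\cap\Nc=\{0\}$, so $\mathrm{Im}(M)$ is contained in some complement of $\Nc$ in $\Rb^n$, namely an $\Oc_Q$ for a suitable $Q\succ 0$. One must then extend the non-negative factorization of the projector onto $\mathrm{Im}(M)$ to a non-negative factorization of a projector onto the full $\Oc_Q$, by appending non-negative columns to $M$ spanning a complement of $\mathrm{Im}(M)$ inside $\Oc_Q$ and dually extending $M^\dag$ by non-negative rows, so that the resulting extended matrix meets the criterion of Theorem~\ref{thm: pos_dec}.

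The main obstacle is this $(\Rightarrow)$ direction: one must exploit the freedom in choosing $Q$ (equivalently, the complement of $\Nc$) so that the extra columns appended to $M$ can be chosen non-negative and the resulting $K$ admits a non-negative left inverse. In other words, one must select a complement of $\Nc$ that is positively generated in a way compatible with the already non-negative $\mathrm{Im}(M)$. Showing that such a $Q$ exists whenever a minimal RPMR does is the technically delicate step, and should follow by applying the permutation-based characterization of Theorem~\ref{thm: pos_dec} to the family of observable subspaces parametrized by $Q$.
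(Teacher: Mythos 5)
The paper's own proof is a two-line appeal to realization theory: compose the reachable reduction with an observable one, and note that a reachable and observable realization is minimal. Your proposal attempts to fill in both directions, but each contains a genuine gap. In the ($\Leftarrow$) direction, the composition is not actually constructed. The product $\Pi_\Rc\Pi_{\Oc_Q}$ is not a projector unless the two projectors commute, and for a generic complement $\Oc_Q$ of $\Nc$ the intersection $\Rc\cap\Oc_Q$ need not be a complement of $\Rc\cap\Nc$ inside $\Rc$, so its dimension need not equal the minimal dimension $d$. Moreover, the matrices ``assembled as products of $J^\dag,K$'' must satisfy the left-inverse identity $(K^\dag J)(J^\dag K)=I_d$ for the composite to be a rank-$d$ projector factorization; non-negativity of products of non-negative matrices is the easy half, and the ``suitable normalization'' carrying the other half is exactly what is missing. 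The intended construction is to apply the observable reduction to the already-reduced system $\Sigma_R$ (so the relevant observable subspace lives in $\Rb^q$ and is related to $\Oc_Q$ through $J,J^\dag$), which you do not carry out.

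The ($\Rightarrow$) direction is announced rather than proved: you correctly isolate the crux --- exhibiting a $Q\succ 0$ and extending the non-negative factorization of the projector onto $\mathrm{Im}(M)$ to one onto all of $\Oc_Q$ by appending non-negative columns while preserving monotonicity --- and then state that it ``should follow'' from Theorem~\ref{thm: pos_dec}. There is no a priori reason such an extension exists, and no argument is given; this is the technically delicate step and it is left open. To be fair, the paper's one-line proof does not address this direction either, and your preliminary dual computation (using $A$-invariance of $\Nc$ and $\Nc\subseteq\ker(CA^k)$ to show that any projector with kernel $\Nc$ preserves the Markov coefficients) is correct and is a useful ingredient the paper leaves implicit. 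But as written, neither implication of the equivalence is established.
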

\begin{proof}
    By the standard method used to obtain a minimal realization (Chapter 6 of \cite{kailath}), we have that if we reduce $\Sigma$ to a reachable subspace first, and to an observable next, then the reduced system has minimal dimension. If we can do both reductions in a robust, positivity preserving way, such minimal realization is also a RPMR.
\end{proof}
While of theoretical interest, this characterization is hard to test, since it implies assessing the possibility of factorizing the projections on any viable complement to the non-observable subspace.

\section{Algebraic approach to RPMR}
\subsection{Obtaining RPMR by extending the reachable space}

In the last section  we have proposed a simple approach to check necessary and sufficient conditions for the existence of a {\em minimal} RPMR. By "minimal" we mean that the reduced model is reachable, i.e. the entire unreachable subspace is eliminated by the reduction. When these conditions are not satisfied, we would hope to be able to perform a {\em non-minimal} RPMR, namely eliminate at least a part of the unreachable subspace. To this aim, we here summarize and extend the algebraic approach presented in \cite{tac2023}. This method allows us to enclose the reachable space to a bigger subspace that admits a positive reduction. We start by introducing the fundamental mathematical concepts in order to better formalize this methodology.

Let us start by defining a product between vectors: Let $x,y \in \Rb^n$ and let $p \in \Rb^n$ such that $[p]_i>0$ for all $i$, then we define the element-wise product (parametrized by $p$) $\wedge_p:\Rb^n\times\Rb^n\to\Rb^n$ as
\[[x \wedge_p y]_i = \frac{[x]_i [y]_i}{[p]_i}.\]
This product notion allows us to define a distorted algebra.
\begin{definition}
    A $p$-distorted algebra $\As$ over $\Rb^n$ is a vector space that is closed under the element-wise multiplication operation $\wedge_p$.
\end{definition}

Given a vector space $\Vc \subseteq \Rb^n$ we denote by $\alg_p(\Vc)$ the closure to a $p$-distorted algebra generated by $\Vc$, i.e. if $x, y\in\Vc$ then $\alpha x + \beta y \in\alg_p(\Vc)$ for all $\alpha,\beta\in\Rb$ and $x\wedge_p y\in\alg_p(\Vc)$. To compute such a closure one can resort to the following fact, see e.g. \cite{tac2023}: \( \alg_p(\Vc) =  p \wedge_\one \alg_\one(p^{-1} \wedge_\one \Vc)\) where $\alg_\one$is the closure to an algebra w.r.t. the product $\wedge_\one$ and $[p^{-1}]_i = \frac{1}{[p]_i}$.

Given a $p$-distorted algebra $\As \subseteq \Rb^n$ of dimension $q$, there exist a set of non-negative {\em idempotent generators} $\{f_i\}$, $i=1, \ldots, q$, i.e. $\mathrm{span}\{f_i\} = \As$ such that $f_i\wedge_p f_j = f_i \delta_{i,j}$ and $\sum_i f_i = p$. 
If we then construct a matrix $J\in\Rb^{n\times q}$ whose columns are the idempotent generators $f_i$ we obtain a monotone matrix. 

\begin{proposition}
\label{prop:distorted_algebra_monotone}
    Consider a $p$-distorted algebra $\As \subseteq \Rb^n$ of dimension $q \leq n$ and let $\{f_i\}$ be its set of idempotent generators. Let 
    \(
        J = \begin{bmatrix}
            f_1 & \dots & f_q
        \end{bmatrix}.
    \)  Then $J$ is monotone.
\end{proposition}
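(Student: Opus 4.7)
The plan is to exploit Proposition \ref{prop: rect monotone}: I will show that $J$ is a full-column-rank non-negative matrix which contains $q$ pairwise orthogonal rows, which are in fact positive multiples of the canonical basis of $\Rb^q$.

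First, I would unpack the defining property $f_i \wedge_p f_j = f_i \delta_{i,j}$ componentwise. For $i \neq j$ and every index $k \in \{1, \ldots, n\}$, this yields $[f_i]_k [f_j]_k / [p]_k = 0$; since $[p]_k > 0$ by hypothesis, we obtain $[f_i]_k [f_j]_k = 0$. Because $f_i, f_j \geq 0$, Remark \ref{rem:disjoint_support} applies and gives $\mathrm{supp}(f_i) \cap \mathrm{supp}(f_j) = \{0\}$, i.e. the idempotent generators have pairwise disjoint supports.

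Next, since $\{f_i\}_{i=1}^q$ spans the $q$-dimensional algebra $\As$, the $f_i$'s are linearly independent and in particular nonzero. Therefore, for each $i$ there exists an index $k_i$ with $[f_i]_{k_i} > 0$; the disjoint-supports property then forces $[f_j]_{k_i} = 0$ for all $j \neq i$. Consequently, the $k_i$-th row of $J$ is the vector $[f_i]_{k_i} \, e_i^T$, a strictly positive scalar multiple of the $i$-th canonical row vector in $\Rb^q$. Moreover, the indices $k_1, \ldots, k_q$ are distinct (they lie in disjoint supports), so $J$ contains $q$ rows which are mutually orthogonal (being positive multiples of distinct standard basis vectors). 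Linear independence of the columns also follows, confirming $J$ has full column rank.

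Finally, $J \geq 0$ since each $f_i \geq 0$, and $J \in \Rb^{n\times q}$ has rank $q$ and contains $q$ mutually orthogonal rows, so Proposition \ref{prop: rect monotone} yields that $J$ is monotone, concluding the proof. I do not anticipate a real obstacle here: the only subtlety is noticing that the positivity of $p$ is precisely what converts the distorted-product idempotence $f_i \wedge_p f_j = f_i \delta_{i,j}$ into the ordinary disjoint-support condition required to invoke the machinery of Section \ref{sect: characterization of monotone matrices}.
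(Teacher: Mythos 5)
Your proof is correct and follows essentially the same route as the paper's: use the idempotence relation $f_i \wedge_p f_j = f_i\delta_{i,j}$ together with $p>0$ and non-negativity to get disjoint supports, then invoke Proposition \ref{prop: rect monotone}. You are in fact slightly more careful than the paper, which jumps from ``the columns are mutually orthogonal'' directly to Proposition \ref{prop: rect monotone} (a statement about orthogonal \emph{rows}); your explicit construction of the rows $k_1,\dots,k_q$ that are positive multiples of distinct canonical vectors fills in that implicit step.
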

\begin{proof}
    Taking into account that the idempotent generators are non-negative vectors, $f_i\wedge_p f_j = f_i \delta_{i,j}$ holds if and only if they have disjoint support, hence if and only if they are mutually orthogonal. Then, by Proposition \ref{prop: rect monotone}, $J$ is monotone.
\end{proof}
{A direct consequence of Proposition \ref{prop:distorted_algebra_monotone} is that there exists a non-negative left inverse $J^\dag$ of $J$ so that $\Pi_\As :=JJ^\dag$ is a projector onto $\As$. Hence it is always possible to find non-negative factors of a projector onto a distorted algebra.} 
Furthermore, \cite[Theorem 3]{tac2023} proves that by picking $p\in\Rb^n$ such that $p = \sum_i \lambda_i v_i$ where ${v_i}$ are generators of $\Vc$ and $\lambda_i\neq0$ for all $i$ and such that $\supp(p)=\supp(\Vc)$ and $p>0$ provides the smallest distorted algebra $\alg_p(\Vc)$ that contains the vector space $\Vc$.

Finally, the routine to perform a RPMR by projecting onto the reachable space combining the two approaches is summarized Algorithm \ref{algo:rpmr}. Recall that the same routine can be used to perform a RPMR by projecting onto a legitimate observable space.

\begin{algorithm}[ht]
    \caption{Reachable RPMR}
    \label{algo:rpmr}
    \SetAlgoLined
    Consider a positive linear system $\Sigma = (A,B,C)$\; 
    Compute its reachable space $\Rc \subset \Rb^n$\;
    Check the existence of a non-negative factorization of $\Pi_\Rc$, using Alg. \ref{algo:existence_nonneg_fact}\;
    \If{$\exists \Pi_\Rc = J J^\dag$, with $J,J^\dag \geq 0 $}{
    $\Sigma_r = (J^\dag AJ, J^\dag B, CJ)$\;
    \textbf{Stop}\;
    }
    \Else{
    Let $r_i$ be the generators of $\Rc$\;
    Compute $p = \sum_i \lambda_i r_i$ with $\lambda_i\neq0$ for all $i$ and such that $\supp(p)=\supp(\Vc)$ and $p>0$\;
    Compute $\As = \alg_p(\Rc)$\;
    \If{$\mathrm{dim}(\As) < n $}{
        Compute the matrix $J$ whose columns are the idempotent generators of $\As$\;
        Compute a non-negative left-inverse $J^\dag$\;
        Project onto $\As$: $\Sigma_r = (J^\dag AJ, J^\dag B, CJ)$\;
        \textbf{Stop}\;
    }
    }
    \textbf{Negative output:} RPMR could not be performed on system $\Sigma$.
\end{algorithm}

\subsection{Comparison of the results: monotone matrices versus algebraic approach}
A natural question is whether the algebraic approach is able to provide a minimal RPMR if one exists. So far, we have shown that for any $p$-distorted algebra there always exists a projector that can be non-negatively factorized. We now investigate whether all spaces that admit a projector that can be non-negatively factorized are $p$-distorted algebras.
To develop some intuition, we provide a meaningful example.
\begin{example}
    Consider the linear system with state $x(k)\in\Rb^{4}$ and input $u(k)\in\Rb$ defined by the equation $x(k+1) = A x(k) + B u(k)$ with matrices
\[A=\begin{bmatrix}
    0&\varepsilon&0&0\\\varepsilon&0&0&0\\0&0&1&0\\0&0&0&1 
    
\end{bmatrix}
\qquad B = \begin{bmatrix}
    0 \\ 1 \\ 1 \\ 1
\end{bmatrix}
\]
where $\varepsilon\geq0$ is a parameter.
We can then notice that the reachability matrix is 
\[R = \begin{bmatrix}
    0 & \varepsilon & 0 & \varepsilon^3\\ 
    1 & 0 & \varepsilon^2 & 0\\ 
    1 & 1 & 1 & 1\\ 
    1 & 1 & 1 & 1
\end{bmatrix}. \]
In case $\varepsilon=1$ the dimension of the reachable space is $2$ and the truncated reachability matrix is non-negative and monotone:
\[
\Rbar = \begin{bmatrix}
    1 & 0 \\ 0 & 1 \\ 1 & 1\\ 1 & 1
\end{bmatrix},
\qquad
\Rbar^\dag =  \begin{bmatrix}
    1 & 0 & 0 & 0 \\ 0 & 1 & 0 & 0
\end{bmatrix}.
\]
Under this assumption, the reduced system turns out to be of dimension $2$, described by the reduced matrices:
 \[
 A_r = \Rbar^\dag A \Rbar = \begin{bmatrix}
     0 & 1 \\ 1 & 0
 \end{bmatrix}
 \qquad
 B_r = \Rbar^\dag B = \begin{bmatrix}
     0 \\ 1
 \end{bmatrix}
 \]
  However, closing the reachable space $\Rc$ to an algebra w.r.t. the product $\wedge_\one$ we get 
\[
\mathscr{A} = \mathrm{alg}_\one(\Rc) = \mathrm{span}
\begin{Bmatrix}
    \begin{bmatrix}
        1 \\ 0 \\0\\0
    \end{bmatrix},
    \begin{bmatrix}
        0 \\ 1 \\ 0\\0
    \end{bmatrix},
    \begin{bmatrix}
        0 \\ 0 \\ 1\\1
    \end{bmatrix}
\end{Bmatrix}
\]
that results in the reduced system
 \[
 \Tilde{A}_r =  \begin{bmatrix}
     0 & 1 & 0\\ 1 & 0 & 0 \\ 0 & 0 & 1
 \end{bmatrix}
 \qquad
 \Tilde{B}_r =  \begin{bmatrix}
     0 \\ 1 \\ 1
 \end{bmatrix}
 \]
that has dimension equal to $3$.

On the contrary, in the case $\varepsilon \neq 1$, e.g. $\varepsilon = 2$, the reachable space has dimension 3 and it admits a projector with non-negative factorization $\Pi = J J^\dag$
\[
J = \begin{bmatrix}
    1 & 0 & 0 \\ 0 & 1 & 0 \\ 0 & 0 & 1 \\ 0 & 0 & 1
\end{bmatrix},
\qquad
J^\dag = \begin{bmatrix}
    1 & 0 & 0 & 0 \\0 & 1 & 0 & 0 \\ 0 & 0 & 1 & 0
\end{bmatrix}
\]
In this case (and for any $\varepsilon \neq 1$), the algebra that enclose $\Rc$ is the same as in the case $\varepsilon = 1$, and hence provides a reduced model of dimension equal to the one obtained using the monotone matrices approach.
\end{example}

This example shows that in some cases, even though they are very specific, the monotone matrices approach provides a minimal reduction which is smaller than the reduction offered by the algebraic approach. The algebraic structure constrain the reduction to spaces that exhibit mutual orthogonality between its generators. Conversely, the monotone matrices approach can deal also with spaces that do not exhibit this property.  To offer a complete comparison of the approaches, we characterize when the algebraic approach is able to find a minimal RPMR.
\begin{proposition}
    Consider an unreachable pair $(A,B)$. Suppose that it admits a reachable RPMR, i.e. there exists a non-negative factorization of the projector onto the reachable space $\Pi_\Rc = J J^\dag$. Then the reachable space is a $p$-distorted algebra if and only if $J \in \Rb^{n \times q}$ has orthogonal columns. 
\end{proposition}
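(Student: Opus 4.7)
\medskip
\noindent\emph{Proof plan.} The plan rests on two structural descriptions. First, the idempotent generators of a $p$-distorted algebra are non-negative vectors with pairwise disjoint supports (from the discussion preceding Proposition~\ref{prop:distorted_algebra_monotone} combined with Remark~\ref{rem:disjoint_support}). Second, Proposition~\ref{prop: rect monotone} tells us that any non-negative monotone $J\in\Rb^{n\times q}$ admits, after a suitable row permutation, a $q\times q$ diagonal submatrix with strictly positive diagonal entries.

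For the direction $(\Leftarrow)$, the columns $j_1,\ldots,j_q$ of $J$ are orthogonal and non-negative, hence (by Remark~\ref{rem:disjoint_support}) have pairwise disjoint supports. Setting $p:=\sum_l j_l$, disjointness immediately yields $j_l \wedge_p j_m = \delta_{lm}\, j_l$. Expanding arbitrary $x=\sum_l \alpha_l j_l$ and $y=\sum_l \beta_l j_l$ in $\Rc=\Span\{j_l\}$ then gives $x\wedge_p y = \sum_l \alpha_l\beta_l\,j_l \in \Rc$, so $\Rc$ is closed under $\wedge_p$ and is a $p$-distorted algebra whose idempotent generators are (positive multiples of) the $j_l$'s.

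For the direction $(\Rightarrow)$, I would fix idempotent generators $\{f_1,\ldots,f_q\}$ of the $p$-distorted algebra $\Rc$ and expand each column of $J$ as $j_l=\sum_i\alpha_{li}f_i$. Non-negativity of $j_l$ together with disjointness of the $\supp(f_i)$ forces $\alpha_{li}\geq 0$, and the matrix $\alpha\in\Rb^{q\times q}$ is invertible since both $\{j_l\}$ and $\{f_i\}$ are bases of $\Rc$. The main obstacle, and the step I expect to be the hardest to state cleanly, is to show $\alpha$ is a generalized permutation matrix. To this end, I would apply Proposition~\ref{prop: rect monotone} to $J$: after possibly relabelling the columns of $J$ (which preserves mutual orthogonality) there exist row indices $r_1,\ldots,r_q$ and positive scalars $d_k$ with $[j_l]_{r_k}=d_k\delta_{kl}$. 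Disjointness of the $\supp(f_i)$ allows us to rewrite $[j_l]_{r_k} = \alpha_{l,i(r_k)}[f_{i(r_k)}]_{r_k}$, where $i(r_k)$ is the unique index whose idempotent contains $r_k$ in its support. Matching the two expressions forces $\alpha_{l,i(r_k)}=0$ for $k\neq l$ and $\alpha_{l,i(r_l)}>0$; thus each row of $\alpha$ has at most one non-zero entry and, by invertibility, $\alpha$ is a generalized permutation. Consequently, each $j_l$ is a positive multiple of some $f_i$, and the columns of $J$ are mutually orthogonal.
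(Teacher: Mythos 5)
Your proof is correct, and the easy direction (orthogonal columns $\Rightarrow$ algebra) coincides with the paper's. For the hard direction, however, you take a genuinely different route. The paper argues by contradiction: using Proposition~\ref{prop: rect monotone} it writes $J$ (up to row permutation) as $\left[\begin{smallmatrix}V_0\\V_1\end{smallmatrix}\right]$ with $V_0$ an invertible generalized permutation block, observes that for two non-orthogonal columns the product $v_i\wedge_p v_j$ is nonzero yet vanishes on the first $q$ coordinates, and concludes it cannot lie in $\im(J)$ because the top block of any nonzero element of $\im(J)$ is $V_0c\neq 0$; hence $\Rc$ would not be closed under $\wedge_p$. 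You instead argue directly: you expand the columns of $J$ in the idempotent generators $f_i$ of the algebra, use non-negativity and disjointness of the $\supp(f_i)$ to get a non-negative invertible coefficient matrix $\alpha$, and then use the same diagonal row submatrix from Proposition~\ref{prop: rect monotone} to force $\alpha$ to be a generalized permutation. Your version yields strictly more information --- each column of $J$ is a positive multiple of an idempotent generator, not merely orthogonal to the others --- at the cost of invoking the existence of idempotent generators, which the paper's contradiction argument never needs. The only step you should spell out is that the map $k\mapsto i(r_k)$ is injective (if $i(r_k)=i(r_{k'})$ for $k\neq k'$ then $\alpha_{k,i(r_k)}$ would be simultaneously positive and zero), hence bijective; this is what licenses the jump from ``$\alpha_{l,i(r_k)}=0$ for $k\neq l$'' to ``each row of $\alpha$ has at most one nonzero entry.'' With that sentence added, the argument is complete.
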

\begin{proof}
     Recall that the reachable space is $\Rc= \mathrm{Im}(J)$ and denote by $v_i$ the $i$-th column of $J$.\\
    ($\Rightarrow$)
    If $J$ has orthogonal columns, i.e. the column vectors have disjoint support, $v_i \wedge_p v_j = 0 \; \forall i,j $ for any $p$. It follows that $\Rc$ is closed under the element-wise multiplication and hence is a $p$-distorted algebra, with idempotent generators $\{v_i\}$ and $p = \sum_i v_i$.\\
    ($\Leftarrow$)
    Suppose, by contradiction, that not all columns of $J$ are mutually orthogonal, i.e. there exist $i,j$ such that $v_i \wedge_p v_j \neq 0$ for any $p$ such that $\forall i,$ $[p]_i>0$. Recall that, up to a permutation of the rows, $J$ has the following block structure
    \[
J = \left[
    \begin{array}{c}
      V_0 \\ \hline
      V_1
    \end{array}
    \right],
\]
    where $V_0 \in \Rb^{q \times q}$ is a square invertible matrix with mutually orthogonal columns. This implies that for the indices $i,j$ such that $v_i \wedge_p v_j \neq 0$, $v_i \wedge_p v_j$ has the form
      \[
v_i \wedge_p v_j = \left[
    \begin{array}{c}
      0 \\
      \vdots \\
      0 \\
      \hline
      *
    \end{array}
    \right]
\]
where the upper block of zeros has dimension $q$. However, $v_i \wedge_p v_j \notin \Rc$ since all columns of $J$ have one positive entry among the first $q$ entries. Thus $\Rc$ is not an algebra.
\end{proof}

\section{Conclusion}
In this work, after defining RPMR problems, given a positive state space representation of a linear system having transfer matrix $W(z)$, we provide:  (i) a procedure that verifies the existence and constructs
a positive reduced-order reachable realization of the same transfer function $W(z),$ using positive reduction maps;
  (ii) a method to construct a positive reduction on a larger space using algebraic techniques, when the positive reachable  realization of (i) does not exists. The 
procedure (i) is based on monotone matrix theory, and allows us to characterize when the method (ii) is able to obtain a minimal positive reduction. Method (i) is applicable to both discrete-time and continuous-time systems.
However, for continuous-time systems,  the method currently applies only to the subclass of continuous-time positive systems for which with $A \geq 0$.\footnote{For a continuous-time linear system with parameters $A,B,C$, a necessary and sufficient conditions for the nonnegativity of all state and output trajectories corresponding to non-negative  initial states and non-negative inputs is that that $B$, and $C$ are non-negative and the system matrix $A$ is Metzler \cite{farina2000positive}.}

Our theory can also be adapted, by duality, to obtain positive {\em observable} reductions. In this case, however, the non-uniqueness in the definition of the complement to the non-observable subspace (see \cite{kailath}) allows  for providing only sufficient conditions for the existence of robust positive observable reductions.

We believe that our work represents a new direction, and a step forward, in the development of a complete realization theory for positive linear system.
Further developments include an in-depth investigation of the observable reductions, the extension of approach (i) to the general (Metzler) continuous-time case and application to large scale systems under locality constraints.

Lastly, the possibility of exact model reduction hinges on the original state-space model not
being a minimal realization for its own input-output relation. Our methods essentially build
a minimal, or at least smaller, realization that is also positive. Noisy models, which typically
become formally controllable, are not likely reducible with our methods. For this reason, an important development of these techniques is towards approximate model reduction, where accuracy in the prediction is traded for a system of small dimension.

\nocite{*}
\bibliography{ref}

\begin{thebibliography}{10}
\providecommand{\url}[1]{#1}
\csname url@samestyle\endcsname
\providecommand{\newblock}{\relax}
\providecommand{\bibinfo}[2]{#2}
\providecommand{\BIBentrySTDinterwordspacing}{\spaceskip=0pt\relax}
\providecommand{\BIBentryALTinterwordstretchfactor}{4}
\providecommand{\BIBentryALTinterwordspacing}{\spaceskip=\fontdimen2\font plus
\BIBentryALTinterwordstretchfactor\fontdimen3\font minus \fontdimen4\font\relax}
\providecommand{\BIBforeignlanguage}[2]{{%
\expandafter\ifx\csname l@#1\endcsname\relax
\typeout{** WARNING: IEEEtran.bst: No hyphenation pattern has been}%
\typeout{** loaded for the language `#1'. Using the pattern for}%
\typeout{** the default language instead.}%
\else
\language=\csname l@#1\endcsname
\fi
#2}}
\providecommand{\BIBdecl}{\relax}
\BIBdecl

\bibitem{farina2000positive}
L.~Farina and S.~Rinaldi, \emph{Positive linear systems: theory and applications}.\hskip 1em plus 0.5em minus 0.4em\relax John Wiley \& Sons, 2000, vol.~50.

\bibitem{rabiner1989}
L.~Rabiner, ``A tutorial on hidden markov models and selected applications in speech recognition,'' \emph{Proceedings of the IEEE}, vol.~77, no.~2, pp. 257--286, 1989.

\bibitem{benvenuti2004tutorial}
L.~Benvenuti and L.~Farina, ``A tutorial on the positive realization problem,'' \emph{IEEE Transactions on automatic control}, vol.~49, no.~5, pp. 651--664, 2004.

\bibitem{benvenuti2022minimal}
L.~Benvenuti, ``Minimal positive realizations: A survey,'' \emph{Automatica}, vol. 143, p. 110422, 2022.

\bibitem{ohta}
Y.~Ohta, H.~Maeda, and S.~Kodama, ``Reachability, observability, and realizability of continuous-time positive systems,'' \emph{SIAM Journal on Control and Optimization}, vol.~22, no.~2, pp. 171--180, 1984.

\bibitem{Grussler}
C.~Grussler and T.~Damm, ``A symmetry approach for balanced truncation of positive linear systems,'' in \emph{2012 IEEE 51st IEEE Conference on Decision and Control (CDC)}, 2012, pp. 4308--4313.

\bibitem{bullo}
F.~Bullo, \emph{Lectures on Network Systems}, {1.7}~ed.\hskip 1em plus 0.5em minus 0.4em\relax Kindle Direct Publishing, 2024.

\bibitem{cheng2021model}
X.~Cheng and J.~M. Scherpen, ``Model reduction methods for complex network systems,'' \emph{Annual Review of Control, Robotics, and Autonomous Systems}, vol.~4, pp. 425--453, 2021.

\bibitem{amjad_generalized_2020}
R.~A. Amjad, C.~Blochl, and B.~C. Geiger, ``\BIBforeignlanguage{en}{A {Generalized} {Framework} {For} {Kullback}–{Leibler} {Markov} {Aggregation}},'' \emph{\BIBforeignlanguage{en}{IEEE Transactions on Automatic Control}}, vol.~65, no.~7, pp. 3068--3075, Jul. 2020.

\bibitem{kotsalis_balanced_2008}
G.~Kotsalis, A.~Megretski, and M.~Dahleh, ``\BIBforeignlanguage{en}{Balanced {Truncation} for a {Class} of {Stochastic} {Jump} {Linear} {Systems} and {Model} {Reduction} for {Hidden} {Markov} {Models}},'' \emph{\BIBforeignlanguage{en}{IEEE Transactions on Automatic Control}}, vol.~53, no.~11, pp. 2543--2557, Dec. 2008.

\bibitem{deng_optimal_2011}
K.~Deng, P.~G. Mehta, and S.~P. Meyn, ``\BIBforeignlanguage{en}{Optimal {Kullback}-{Leibler} {Aggregation} via {Spectral} {Theory} of {Markov} {Chains}},'' \emph{\BIBforeignlanguage{en}{IEEE Transactions on Automatic Control}}, vol.~56, no.~12, pp. 2793--2808, Dec. 2011.

\bibitem{li2011positivity}
P.~Li, J.~Lam, Z.~Wang, and P.~Date, ``Positivity-preserving {$H_\infty$} model reduction for positive systems,'' \emph{Automatica}, vol.~47, no.~7, pp. 1504--1511, 2011.

\bibitem{virnik}
T.~Reis and E.~Virnik, \emph{Positivity Preserving Model Reduction}, 09 2009, vol. 389, pp. 131--139.

\bibitem{Rantzer}
A.~Sootla and A.~Rantzer, ``Scalable positivity preserving model reduction using linear energy functions,'' 12 2012, pp. 4285--4290.

\bibitem{virnik2}
T.~Reis and E.~Virnik, ``Positivity preserving balanced truncation for descriptor systems,'' \emph{SIAM Journal on Control and Optimization}, vol.~48, no.~4, pp. 2600--2619, 2009.

\bibitem{Feng}
J.-e. Feng, J.~Lam, Z.~Shu, and Q.~Wang, ``Internal positivity preserved model reduction,'' \emph{Int. J. Control}, vol. 83(3), pp. 575--584, 03 2010.

\bibitem{tac2023}
T.~Grigoletto and F.~Ticozzi, ``Algebraic reduction of hidden markov models,'' \emph{IEEE Transactions on Automatic Control}, pp. 1--16, 2023.

\bibitem{monotone}
O.~L. Mangasarian, ``Characterizations of real matrices of monotone kind,'' \emph{SIAM Review}, vol.~10, no.~4, pp. 439--441, 1968.

\bibitem{collatz}
L.~Collatz, \emph{Functional analysis and numerical mathematics}.\hskip 1em plus 0.5em minus 0.4em\relax New York Academic Press, 1910.

\bibitem{berman}
A.~Berman and R.~J. Plemmons, \emph{Nonnegative Matrices in the Mathematical Sciences}.\hskip 1em plus 0.5em minus 0.4em\relax Elsevier, 1979.

\bibitem{A9}
P.~G. Coxson and H.~Shapiro, ``Positive input reachability and controllability of positive systems,'' \emph{Linear Algebra and its Applications}, vol.~94, pp. 35--53, 1987.

\bibitem{kailath}
T.~Kailath, \emph{Linear Systems}, ser. Information and System Sciences Series.\hskip 1em plus 0.5em minus 0.4em\relax Prentice-Hall, 1980.

\bibitem{antoulas2005approximation}
A.~C. Antoulas, \emph{Approximation of large-scale dynamical systems}.\hskip 1em plus 0.5em minus 0.4em\relax SIAM, 2005.

\end{thebibliography}

\end{document}